\theoremstyle{plain}
\newtheorem{theorem}{Theorem}[section]
\newtheorem{lemma}[theorem]{Lemma}
\newtheorem{proposition}[theorem]{Proposition}
\newtheorem{corollary}[theorem]{Corollary}
\theoremstyle{definition}
\newtheorem{definition}[theorem]{Definition}
\newcommand{\dda}{\mathord{\mbox{\makebox[0pt][l]{\raisebox{-.4ex}{$\downarrow$}}$\downarrow$}}}
\newcommand{\ua}{\mathord{\uparrow}}
\newcommand{\da}{\mathord{\downarrow}}
\newcommand{\rom}[1]{\rm{\uppercase\expandafter{\romannumeral #1}}}
\newcommand{\bv}{\mathord{\bigvee}}
\begin{document}

\begin{frontmatter}

\title{A characterization of the consistent Hoare powerdomains over dcpos}

\author[1]{Zhongxi Zhang\corref{a1}}
\address[1]{School of Computer and Control Engineering, Yantai University, Yantai, Shandong, 264005, China}
\cortext[a1]{Corresponding author.}
\ead{zhangzhongxi89@gmail.com}

\author[2]{Qingguo Li}
\address[2]{College of Mathematics and Econometrics, Hunan University, Changsha, Hunan, 410082, China}
\ead{liqingguoli@aliyun.com}

\author[1] {Nan Zhang}
\ead{zhangnan0851@163.com}

\begin{abstract}
It has been shown that for a dcpo $P$, the Scott closure of $\Gamma_{c}(P)$ in $\Gamma(P)$ is a consistent Hoare powerdomain of $P$, where $\Gamma_{c}(P)$ is the family of nonempty, consistent and Scott closed subsets of $P$, and $\Gamma(P)$ is the collection of all nonempty Scott closed subsets of $P$. In this paper, by introducing the notion of a $\bv{-}$existing set, we present a direct characterization of the consistent Hoare powerdomain: the set of all $\bv{-}$existing Scott closed subsets of a dcpo $P$ is exactly the consistent Hoare powerdomain of $P$. We also introduce the concept of an $F$-Scott closed set over each dcpo-$\vee_{\ua}$-semilattice. We prove that the Scott closed set lattice of a dcpo $P$ is isomorphic to the family of all $F$-Scott closed sets of $P$'s consistent Hoare powerdomain.
\end{abstract}

\begin{keyword}
consistent Hoare powerdomain \sep $\bv{-}$existing set \sep dcpo-$\vee_{\ua}$-semilattice \sep $F$-Scott closed set 
\end{keyword}
\end{frontmatter}

\section{Introduction}

The Hoare powerdomain plays an important role in modeling the programming semantics of nondeterminism, which is analogous to the powerset construction (see for example \cite{HECKMANN199177,HECKMANN1992,HECKMANN2013215,Plotkin1976,SMYTH197823,SMYTH1983}). The Hoare power domain over a dcpo (directed complete poset) is the free inflationary semilattice where the inflationary operator is exactly the binary join operator. The standard construction of a Hoare powerdomain consists all nonempty Scott closed subsets $\Gamma(P)$ of the dcpo $P$, order given by the inclusion relation. In \cite{Yuan2014Consistent}, Yuan and Kou introduced a new type of powerdomain, called the consistent Hoare powerdomain. The new powerdomain over a dcpo is a free algebra where the inflationary operator delivers joins only for consistent pairs. They provided a concrete way of constructing the consistent Hoare powerdomain over a domain (continuous dcpo): the family of all nonempty relatively consistent Scott closed subsets of the domain is such a powerdomain. Follows from the work, there is a natural problem: whether every dcpo has a consistent Hoare powerdomain. Geng and Kou \cite{GENG2017169} gave an affirmative answer to the question by showing that the Scott closure $\overline{\Gamma_{c}(P)}$ of $\Gamma_{c}(P)$ in $\Gamma(P)$ is a consistent Hoare powerdomain over the dcpo $P$, where $\Gamma_{c}(P)$ is the collection of all nonempty consistent Scott closed subsets of $P$. One question arises naturally: can we give a direct characterization of the consistent Hoare powerdomain over every dcpo $P$? That is to say, what type of Scott closed subsets of $P$ is exactly the consistent Hoare powerdomain? This paper is mainly set to answer this question.

The consistent Hoare powerdomain can be viewed as a type of completion that embeds every dcpo into a dcpo which is also complete with respect to consistent pair joins. In \cite{zd}, Zhao and Fan introduced a type of dcpo-completion embedding each poset into a dcpo, called \mbox{$D$-completion}. They proved that the Scott closed set lattice of a poset is order isomorphic to that of the \mbox{$D$-completion}. It is also the case for the sobrification, i.e., the topology lattice $\mathcal{O}(X)$ of a $T_{0}$-space $X$ is  isomorphic to the topology lattice $\mathcal{O}(X^{s})$ of the sobrification $X^{s}$ of $X$ (see \cite{gg1}). In this paper, we introduce a type of closed sets on every dcpo-$\vee_{\ua}$-semilattice, called $F$-Scott closed sets. We prove that the Scott closed set lattice of a dcpo is isomorphic to the family of all $F$-Scott closed subsets of its consistent Hoare powerdomain. It is known that two sober dcpos are isomorphic if and only if their Scott set lattices are isomorphic (see \cite{HZP,Zhao2018U}). Consequently, the consistent Hoare powerdomains for sober dcpos are uniquely determined up to isomorphism.

\section{Preliminaries}

This section is an introduction to the concepts and results about the consistent Hoare powerdomains over domains and dcpos. For more details, refer to \cite{GENG2017169,gg1,Yuan2014Consistent}. In this paper, the order relation on each family of sets is the set-theoretic inclusion relation.

A nonempty subset $D$ of a poset $P$ is called \emph{directed} if every pair has an upper bound in~$D$. If every directed $D \subseteq P$ has a supremum $\bv D$, then $P$ is called a \emph{dcpo}. For $A \subseteq P$, let $\da A = \{x \in P : x \leq a$ for some $a \in A\}$, $\da x = \da \{x\}$ where $x \in P$. If $A = \da A$, then $A$ is called a \emph{lower set}. For $x, y \in P$, we say $x$ is \emph{way below} $y$, written $x \ll y$, if $y \leq \bv D$ implies $x \in \da D$ for any directed $D \subseteq P$. 
Let $\dda x = \{y \in P : y \ll x \}$, and $\dda A = \bigcup \{\dda a : a \in A\}$. A dcpo $P$ is called a \emph{domain} if each element $x$ is the directed join of $\dda x$. A subset $A$ of a dcpo $P$ is \emph{Scott closed} if it is a lower set and closed under directed sups. Let $cl(A)$ denote the Scott closure of $A$.

\begin{definition}\cite{Yuan2014Consistent}
A \emph{consistent inflationary semilattice} is a dcpo $P$ with a Scott continuous binary partial operator $\biguplus_{\ua}$ defined only for consistent pairs of points that satisfies three equations for commutativity $x \biguplus_{\ua} y = y \biguplus_{\ua} x$, associativity $x \biguplus_{\ua} (y \biguplus_{\ua} z) = (x \biguplus_{\ua} y) \biguplus_{\ua} z$ and idempotency $x \biguplus_{\ua} x = x$ together with the inequality $x \leq x \biguplus_{\ua} y$ for $x, y, z \in P$. The free consistent inflationary semilattice over a dcpo $P$ is called the \emph{consistent Hoare powerdomain} over $P$ and denoted by~$H_{c}(P)$.
\end{definition}

For a consistent inflationary semilattice, the operator $\biguplus_{\ua}$ coincides with $\vee_{\ua}$, the join operator defined only for consistent pairs. That is to say, a consistent inflationary semilattice is exactly a dcpo that is a $\vee_{\ua}$-semilattice, also called \emph{a dcpo-$\vee_{\ua}$-semilattice}.
A Scott closed subset $A$ of a domain $L$ is called \emph{relatively consistent} if $A$ is the Scott closure of the directed union $\bigcup^{\ua}\{\da F : F \in \mathcal{F}_{C}(A)\}$, where
\begin{center}
	$\mathcal{F}_{C}(A) = \{F : F $ is a nonempty finite consistent subset of $L$ and $F \subseteq \dda A\}$.
\end{center}
The set of all nonempty relatively consistent Scott closed subsets of $L$ is denoted by $R\Gamma_{C}(L)$.

\begin{theorem}\emph{\cite{Yuan2014Consistent}}
	Let $L$ be a domain. The consistent Hoare powerdomain $H_{c}(P)$ over $L$ can be realized as $R\Gamma_{C}(L)$ where $\biguplus_{\ua} = \bigcup$. The embedding $j$ of $L$ into $R\Gamma_{C}(L)$ is given by $j(x) = \da x$ for $x \in L$.
\end{theorem}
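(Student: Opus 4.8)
The plan is to establish the two properties that characterize the free construction. Writing $\biguplus_{\ua}=\cup$, I would first check that $(R\Gamma_{C}(L),\subseteq,\cup)$ is a dcpo-$\vee_{\ua}$-semilattice into which $j(x)=\da x$ embeds Scott-continuously, and then verify the universal property: for every dcpo-$\vee_{\ua}$-semilattice $Q$ and every Scott-continuous $f\colon L\to Q$ there is a unique Scott-continuous $\vee_{\ua}$-homomorphism $\bar f\colon R\Gamma_{C}(L)\to Q$ with $\bar f\circ j=f$. Throughout I use that $\Gamma(L)$ is a complete lattice in which a directed join is the Scott closure of the union and a finite join is the union.

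For the dcpo structure the key step is that $R\Gamma_{C}(L)$ is closed under directed joins taken in $\Gamma(L)$: for directed $\mathcal{A}\subseteq R\Gamma_{C}(L)$ and $B=cl(\bigcup\mathcal{A})$, the inclusion $\mathcal{F}_{C}(A)\subseteq\mathcal{F}_{C}(B)$ for $A\in\mathcal{A}$ gives $B=cl\bigl(\bigcup\{\da F:F\in\mathcal{F}_{C}(B)\}\bigr)$ at once, while directedness of $\{\da F:F\in\mathcal{F}_{C}(B)\}$ follows from the lemma that every finite subset of $\dda B$ is already contained in $\dda A$ for a single $A\in\mathcal{A}$, so that the directedness inside $\mathcal{F}_{C}(A)$ transfers upward. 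To see that $\cup$ is the partial join, I would take consistent $A,B$ bounded by some $C\in R\Gamma_{C}(L)$, split each $F\in\mathcal{F}_{C}(A\cup B)$ as $F_{A}\cup F_{B}$ with $F_{A}\in\mathcal{F}_{C}(A)$ and $F_{B}\in\mathcal{F}_{C}(B)$, and combine common bounds found in $\mathcal{F}_{C}(A)$ and $\mathcal{F}_{C}(B)$ through a common bound in $\mathcal{F}_{C}(C)$ to get a single $H\in\mathcal{F}_{C}(A\cup B)$ dominating $\da F$ and $\da G$; hence $A\cup B\in R\Gamma_{C}(L)$ and is the join of $A,B$ there. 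Commutativity, associativity, idempotency, inflationarity and Scott continuity of $\cup$ are then routine (the last reduces to $cl(\bigcup_{i}(A_{i}\cup B_{i}))=cl(\bigcup_{i}A_{i})\cup cl(\bigcup_{i}B_{i})$). Finally $\da x\in R\Gamma_{C}(L)$ because $\dda x$ is directed with join $x$, so all its finite subsets are consistent and $cl(\dda x)=\da x$; $j$ is clearly an order embedding, and it is Scott continuous because $cl(\bigcup_{d\in D}\da d)=\da(\bv D)$ for directed $D\subseteq L$.

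For the universal property I would set $\bar f(A)=\bv_{Q}\{\bv_{\ua}f[F]:F\in\mathcal{F}_{C}(A)\}$, where $\bv_{\ua}f[F]$ is the join in $Q$ of the finite set $f[F]$ — it exists, equalling an iterated $\vee_{\ua}$, because $f$ sends an upper bound of $F$ to an upper bound of $f[F]$. This family is directed since $F\mapsto\bv_{\ua}f[F]$ is monotone along the directed family $\{\da F:F\in\mathcal{F}_{C}(A)\}$, so $\bar f$ is well defined into the dcpo $Q$ and monotone. One then checks $\bar f\circ j=f$ from $f(x)=f(\bv\dda x)=\bv\{f(y):y\ll x\}$; Scott continuity of $\bar f$ again from the lemma that a finite subset of $\dda B$ lies in some $\dda A$ with $A\in\mathcal{A}$; the $\vee_{\ua}$-homomorphism property from the $F=F_{A}\cup F_{B}$ splitting together with monotonicity of $\vee_{\ua}$ on its domain; and uniqueness by noting that any Scott-continuous $\vee_{\ua}$-homomorphism $g$ with $g\circ j=f$ satisfies $g(A)=\bv_{Q}\{g(\da F):F\in\mathcal{F}_{C}(A)\}$ and $g(\da F)=\bv_{\ua}\{g(\da y):y\in F\}=\bv_{\ua}f[F]$, since $\da F=\bv_{\ua}\{\da y:y\in F\}$ holds in $R\Gamma_{C}(L)$; hence $g=\bar f$.

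The crux, and essentially the only place where continuity of $L$ is used, is the lemma just invoked: for a directed family $\mathcal{A}$ of Scott-closed sets with $B=cl(\bigcup\mathcal{A})$, every finite subset of $\dda B$ is contained in $\dda A$ for a single $A\in\mathcal{A}$. For one point $y\ll b\in B$, interpolation of $\ll$ in the domain $L$ makes $V=\{c\in L:y\ll c\}$ a Scott-open set; as $b\in V\cap cl(\bigcup\mathcal{A})$, the open set $V$ meets $\bigcup\mathcal{A}$, giving $a\in A$ with $y\ll a$ for some $A\in\mathcal{A}$, i.e.\ $y\in\dda A$, and finiteness together with directedness of $\mathcal{A}$ produces the single $A$. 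This one fact drives both the closure of $R\Gamma_{C}(L)$ under directed joins and the Scott continuity of $\bar f$; everything else is bookkeeping with the semilattice equations and the $\Gamma(L)$-level identities for joins.
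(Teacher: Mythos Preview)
The paper does not prove this theorem; it is quoted without argument as a known result from Yuan and Kou. Your proposal therefore cannot be compared to a proof in the paper, but as a standalone argument it is sound: verifying the free-object universal property directly is the standard route, and the crucial lemma you isolate --- that for a directed family $\mathcal{A}$ of Scott closed sets with $B=cl(\bigcup\mathcal{A})$ every finite subset of $\dda B$ already lies in some single $\dda A$, proved via the Scott-openness of $\dua y$ in a domain --- is exactly where continuity of $L$ enters. The only places to tighten are bookkeeping: when splitting $F=F_{A}\cup F_{B}$ one of the pieces may be empty, and you should record explicitly that $\da F\in R\Gamma_{C}(L)$ for each consistent finite $F$ (immediate, since a common bound of $F$ makes every finite subset of $\dda(\da F)$ consistent), because the uniqueness clause for $\bar f$ relies on writing $A$ as a directed join of such $\da F$ inside $R\Gamma_{C}(L)$.
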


Let $\Gamma_{c}(P)$ denote the family of all nonempty, consistent and Scott closed subsets of a dcpo $P$. And let $\overline{\Gamma_{c}(P)}$ be the Scott closure of $\Gamma_{c}(P)$ in $\Gamma(P)$. The family $\overline{\Gamma_{c}(P)}$ was introduced by Geng and Kou \cite{GENG2017169}, in order to declare that the consistent Hoare powerdomain over every dcpo exists.

\begin{lemma}\label{gj} \emph{\cite{GENG2017169}}
	Let $P$ be a dcpo and $L$ is a dcpo-$\vee_{\ua}$-semilattice. If $f : P \rightarrow L$ is a Scott continuous function, then for any $A \in \overline{\Gamma_{c}(P)}$, $\bv f(A)$ exists in $L$.
\end{lemma}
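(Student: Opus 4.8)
The plan is to reduce everything to one structural fact about $L$ and then argue that the collection of subsets on which $f$ has a join is a Scott closed subset of $\Gamma(P)$. First I would record the following observation about $L$: \emph{if $S\subseteq L$ is such that every nonempty finite subset of $S$ has an upper bound in $L$, then $\bv S$ exists in $L$.} Indeed, for such an $S$ each nonempty finite $F\subseteq S$ is consistent, so by commutativity, associativity and idempotency of $\vee_{\ua}$ the join $\bv F$ exists in $L$; the family $\{\bv F : \emptyset\neq F\subseteq S,\ F\text{ finite}\}$ is then directed (the index $F_{1}\cup F_{2}$ dominates $F_{1}$ and $F_{2}$), hence has a supremum $s$ since $L$ is a dcpo, and a one-line check shows $s=\bv S$.

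Next I would put $\mathcal{A}=\{A\in\Gamma(P) : \bv f(A)\text{ exists in }L\}$ and show that $\mathcal{A}$ is a Scott closed subset of $\Gamma(P)$ with $\Gamma_{c}(P)\subseteq\mathcal{A}$; since $\overline{\Gamma_{c}(P)}$ is by definition the least such set, this yields $\overline{\Gamma_{c}(P)}\subseteq\mathcal{A}$, which is exactly the claim. If $A\in\Gamma_{c}(P)$, choose an upper bound $u$ of $A$ in $P$; by monotonicity $f(u)$ bounds $f(A)$, so every finite subset of $f(A)$ is bounded and $\bv f(A)$ exists by the fact above, giving $\Gamma_{c}(P)\subseteq\mathcal{A}$. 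The same fact shows $\mathcal{A}$ is a lower set of $\Gamma(P)$: if $A\subseteq B$ with $B\in\mathcal{A}$, then $f(A)\subseteq f(B)\subseteq{\da}(\bv f(B))$, so every finite subset of $f(A)$ is bounded and $\bv f(A)$ exists.

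It remains to close $\mathcal{A}$ under directed suprema of $\Gamma(P)$. Let $\mathcal{D}\subseteq\mathcal{A}$ be directed with supremum $A=cl\big(\bigcup\mathcal{D}\big)$ in $\Gamma(P)$, and set $s_{D}=\bv f(D)$ for $D\in\mathcal{D}$. The $s_{D}$ form a directed subset of $L$, so $s:=\bv\{s_{D} : D\in\mathcal{D}\}$ exists. Then $s$ bounds $f\big(\bigcup\mathcal{D}\big)$, hence $\bigcup\mathcal{D}\subseteq f^{-1}({\da}s)$; since $f$ is Scott continuous, $f^{-1}({\da}s)$ is Scott closed, so $A=cl\big(\bigcup\mathcal{D}\big)\subseteq f^{-1}({\da}s)$, i.e.\ $s$ bounds $f(A)$; and $s$ is the least such bound, because any upper bound of $f(A)$ dominates every $s_{D}$ and hence $s$. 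Thus $A\in\mathcal{A}$, which completes the argument. The only non-routine point is the opening observation — that in a dcpo-$\vee_{\ua}$-semilattice \emph{finite} consistency of a set already forces its supremum to exist; everything else is the standard ``preimage of a Scott closed set is Scott closed'' propagation through the Scott closure, where one must only be careful that the join of $\mathcal{D}$ in $\Gamma(P)$ is the Scott closure of $\bigcup\mathcal{D}$ and that ``consistent'' is read as ``has an upper bound.''
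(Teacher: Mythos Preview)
The paper does not supply a proof of this lemma; it is quoted from \cite{GENG2017169}. Your argument is correct and is the natural one: the key structural fact about $L$ (a set all of whose nonempty finite subsets are bounded has a supremum, obtained as the directed join of the finite joins) is exactly what one expects, and the verification that $\mathcal{A}=\{A\in\Gamma(P):\bigvee f(A)\text{ exists}\}$ is a Scott closed lower set of $\Gamma(P)$ containing $\Gamma_{c}(P)$ goes through as you wrote it. One small remark: in the directed-sup step you could have dispensed with checking that $s$ is the \emph{least} upper bound of $f(A)$ and simply invoked your opening observation once more (since $s$ bounds every finite subset of $f(A)$), but the direct verification you gave is equally short.
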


\begin{theorem}\emph{\cite{GENG2017169}}\label{gjt}
	Let $P$ be a dcpo. Then the consistent Hoare powerdomain $H_{c}(P)$ over $P$ can be realized as the $\overline{\Gamma_{c}(P)}$, where $A \vee_{\ua} B = A \bigcup B$ whenever $A, B$ are consistent in $\overline{\Gamma_{c}(P)}$. The embedding $j$ of $P$ into $\overline{\Gamma_{c}(P)}$ is given by $j(x) = \da x$ for any $x \in P$.
\end{theorem}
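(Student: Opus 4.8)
The plan is to establish the three ingredients of the freeness statement in turn: (i) $\overline{\Gamma_{c}(P)}$ is a dcpo-$\vee_{\ua}$-semilattice under $A\vee_{\ua}B=A\cup B$; (ii) $j\colon P\to\overline{\Gamma_{c}(P)}$, $j(x)=\da x$, is a Scott continuous order embedding; and (iii) for every dcpo-$\vee_{\ua}$-semilattice $L$ and every Scott continuous $f\colon P\to L$ there is a \emph{unique} Scott continuous $\vee_{\ua}$-preserving map $\bar f\colon\overline{\Gamma_{c}(P)}\to L$ with $\bar f\circ j=f$. The map realizing (iii) will be $\bar f(A)=\bv f(A)$, the supremum taken in $L$; that this supremum exists for \emph{every} $A\in\overline{\Gamma_{c}(P)}$ is exactly Lemma~\ref{gj}, which is where that lemma enters.

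For (i): $\Gamma(P)$ is a dcpo in which directed joins are $\bv_{i}A_{i}=cl(\bigcup_{i}A_{i})$, and $\overline{\Gamma_{c}(P)}$, being Scott closed in $\Gamma(P)$, is a sub-dcpo and in particular a lower set. If $A,B\in\overline{\Gamma_{c}(P)}$ are consistent, witnessed by $C\in\overline{\Gamma_{c}(P)}$ with $A,B\subseteq C$, then $A\cup B$ is Scott closed (finite unions of Scott closed sets are Scott closed, the Scott opens being a topology), nonempty, and contained in $C$, hence lies in the lower set $\overline{\Gamma_{c}(P)}$; since $A\cup B\subseteq D$ for every common upper bound $D$, it is the join $A\vee B$. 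Commutativity, associativity, idempotency and $A\subseteq A\cup B$ are immediate, and Scott continuity of $\vee_{\ua}$ reduces to $cl(\bigcup_{i}X_{i})\cup cl(\bigcup_{j}Y_{j})=cl(\bigcup_{i,j}(X_{i}\cup Y_{j}))$, both sides being $cl(\bigcup_{i}X_{i}\cup\bigcup_{j}Y_{j})$; one also checks the consistent pairs form a sub-dcpo of $\overline{\Gamma_{c}(P)}^{2}$, using $A_{i}\cup B_{i}$ as a witness for each pair in a directed family.

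For (ii) and the existence half of (iii): $\da x$ is a nonempty consistent Scott closed set (every finite subset is bounded by $x$), so $\da x\in\Gamma_{c}(P)\subseteq\overline{\Gamma_{c}(P)}$; $x\le y\iff\da x\subseteq\da y$ gives the embedding; and $j(\bv D)=\da(\bv D)=cl(\bigcup_{d\in D}\da d)=\bv_{d\in D}j(d)$ for directed $D$ gives Scott continuity (the middle step because $cl(\da D)$ is a Scott closed, hence lower, set containing $\bv D$). Then $\bar f\circ j(x)=\bv f(\da x)=f(x)$ as $f$ is monotone; $\bar f$ is monotone since $A\subseteq B$ gives $f(A)\subseteq f(B)$; $\bar f$ preserves $\vee_{\ua}$ because for consistent $A,B$ the images $f(A),f(B)$ are consistent in $L$ (a common upper bound $C$ of $A,B$ yields $\bar f(C)$ as one) and $\bv(f(A)\cup f(B))=\bv f(A)\vee\bv f(B)$; and $\bar f$ is Scott continuous because, for a directed family $A_{i}$ with join $A=cl(\bigcup_{i}A_{i})$, the set $\{x\in P: f(x)\le\bv_{i}\bar f(A_{i})\}$ is Scott closed and contains each $A_{i}$, hence contains $A$, so $\bar f(A)=\bv f(A)\le\bv_{i}\bar f(A_{i})$, the reverse inequality being monotonicity.

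The real work is uniqueness in (iii), and I expect it to be the main obstacle. Let $g$ be any Scott continuous $\vee_{\ua}$-preserving map with $g\circ j=f$. It agrees with $\bar f$ on $j(P)$ by hypothesis, hence on every finite join $\da F=\bigvee^{\ua}_{x\in F}\da x$ along a nonempty finite consistent $F\subseteq P$ (both sides equal $\bigvee^{\ua}_{x\in F}f(x)$); and since each $A\in\Gamma_{c}(P)$ is the directed join $A=\bv\{\da F: F\subseteq A\text{ finite nonempty}\}$, Scott continuity forces $g=\bar f$ on all of $\Gamma_{c}(P)$. The equalizer $E=\{A\in\overline{\Gamma_{c}(P)}: g(A)=\bar f(A)\}$ is closed under directed joins (being the equalizer of two Scott continuous maps) and contains $\Gamma_{c}(P)$, but it is not evident that $E$ is a lower set, so one cannot immediately conclude $E\supseteq\overline{\Gamma_{c}(P)}$. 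The way I would resolve this is to build $\overline{\Gamma_{c}(P)}$ from the lower set $\Gamma_{c}(P)$ by transfinitely iterating ``adjoin all directed joins'', and to prove the key lemma that each stage stays a lower set (so the iteration genuinely reaches the Scott closure) while sitting inside $E$; equivalently, to show that $\overline{\Gamma_{c}(P)}$ is the smallest subfamily of $\Gamma(P)$ containing $\{\da x: x\in P\}$ and closed under $\vee_{\ua}$ and directed joins, whereupon uniqueness is automatic. With (i)--(iii) in hand, $\overline{\Gamma_{c}(P)}$ together with $j$ is the free consistent inflationary semilattice over $P$, i.e.\ $H_{c}(P)$.
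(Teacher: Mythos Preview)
The paper does not prove this theorem: it is quoted from \cite{GENG2017169} and stated in the preliminaries without argument, so there is no ``paper's own proof'' to compare against. What follows is an assessment of your outline on its own merits.

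Parts (i), (ii), and the existence half of (iii) are essentially correct and cleanly organized; in particular, invoking Lemma~\ref{gj} to guarantee that $\bar f(A)=\bv f(A)$ is well-defined is exactly the right move, and your verification that $\bar f$ is Scott continuous and preserves $\vee_{\ua}$ is fine.

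The gap is precisely where you locate it: uniqueness. You correctly observe that the equalizer $E=\{A:g(A)=\bar f(A)\}$ contains $\Gamma_c(P)$ and is closed under directed joins, but need not be a lower set, so one cannot conclude $E\supseteq\overline{\Gamma_c(P)}$ directly. Your proposed fix---iterate ``adjoin directed joins'' starting from $\Gamma_c(P)$ and prove each stage remains a lower set---is the right \emph{shape}, but the key lemma (``each stage stays a lower set'') is asserted, not proved, and it is \emph{not} automatic. In a general dcpo $Q$, the directed-sup closure of a lower set need not be a lower set: take $Q=\mathbb{N}\cup\{\infty,\omega\}$ with $0<1<\cdots<\infty$, $\omega<\infty$, and $\omega$ incomparable to every natural number; then $\mathbb{N}$ is a lower set whose directed-sup closure $\mathbb{N}\cup\{\infty\}$ omits $\omega$. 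So if your lemma holds for $\Gamma_c(P)$ inside $\Gamma(P)$, it is because of special features of that situation (e.g.\ that $\Gamma_c(P)$ already contains every principal ideal and is closed under consistent finite unions), and those features must be used explicitly in the argument. As written, the uniqueness step is a plan rather than a proof.

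One inequality of uniqueness you do get for free: monotonicity of $g$ and $g(\da a)=f(a)$ give $g(A)\ge\bv f(A)=\bar f(A)$ for every $A\in\overline{\Gamma_c(P)}$. It is only the reverse inequality $g(A)\le\bar f(A)$ that needs the generation/density statement you are aiming for.
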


\section{A direct characterization of the consistent Hoare powerdomain}

\begin{definition}\label{d1}
	 A subset $A$ of a dcpo $P$ is called $\bv{-}$existing if for any continuous function $f : P \rightarrow L$ mapping into a dcpo{-}$\vee_{\ua}$-semilattice $L$, $\bv f(A)$ always exists in $L$. 
\end{definition}

For any directed subset $D$ of $P$, $f(D)$ is directed in $L$ since continuous functions are monotone, and then $\bv f(D)$ exists in the dcpo-$\vee_{\ua}$-semilattice $L$. Hence every directed subset is $\bv{-}$existing. Similarly, every consistent finite subset is also $\bv${-}existing. Notice that an empty set is not $\bv{-}$existing because a dcpo{-}$\vee_{\ua}$-semilattice $L$ may not have a least element.

\begin{proposition}\label{p2}
	Let $f: P \rightarrow L$ be a continuous function from dcpo $L$ to dcpo-$\vee_{\ua}$-semilattice~$L$, and $A \subseteq P$. Then 
	
	\emph{(1)} $\bv f(A)$ exists $\Leftrightarrow$ $\bv f(cl(A))$ exists $\Rightarrow$ $\bv f(A) = \bv f(cl(A))$.
	
	\emph{(2)} $A$ is $\bv${-}existing if and only if $cl(A)$ is $\bv${-}existing.
\end{proposition}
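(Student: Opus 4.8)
The plan is to reduce both parts to one elementary observation: \emph{for any continuous $f : P \to L$ and any $u \in L$, the preimage $f^{-1}(\da u)$ is Scott closed in $P$}. This holds because $\da u$ is a lower set and any directed $D \subseteq \da u$ satisfies $\bv D \leq u$ (the join exists since $L$ is a dcpo), so $\da u$ is Scott closed in $L$; being Scott continuous, $f$ pulls Scott closed sets back to Scott closed sets.

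For (1), I would compare upper bounds on both sides. Suppose first that $\bv f(cl(A))$ exists. Since $A \subseteq cl(A)$, it is an upper bound of $f(A)$. If $u$ is any upper bound of $f(A)$, then $A \subseteq f^{-1}(\da u)$, and as the latter is Scott closed we get $cl(A) \subseteq f^{-1}(\da u)$; hence $u$ bounds $f(cl(A))$ from above and $\bv f(cl(A)) \leq u$. So $\bv f(cl(A))$ is the least upper bound of $f(A)$, which shows $\bv f(A)$ exists and $\bv f(A) = \bv f(cl(A))$. Conversely, suppose $s := \bv f(A)$ exists. The observation with $u = s$ gives $cl(A) \subseteq f^{-1}(\da s)$, so $s$ is an upper bound of $f(cl(A))$; and any upper bound of $f(cl(A))$ is in particular one of $f(A)$, hence $\geq s$. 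Therefore $\bv f(cl(A))$ exists and equals $s$. This yields the stated chain of implications, and in fact equality of the two joins whenever either exists.

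Part (2) is then a formality: for each continuous $f : P \to L$ into a dcpo-$\vee_{\ua}$-semilattice, part (1) says $\bv f(A)$ exists if and only if $\bv f(cl(A))$ exists, so $A$ satisfies the defining condition of Definition \ref{d1} exactly when $cl(A)$ does. The only point worth recording is that $cl(A) = \emptyset$ precisely when $A = \emptyset$, so the nonemptiness built into "$\bv$-existing" (the empty set being excluded) passes between $A$ and $cl(A)$ without incident.

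I do not anticipate a real obstacle here; the care needed is only in keeping the direction of the preimage argument straight and remembering that it is $\da u$, not $\ua u$, that is Scott closed. The remainder is routine bookkeeping with suprema and upper bounds.
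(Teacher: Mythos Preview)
Your proof is correct and follows essentially the same route as the paper: both hinge on the observation that $f^{-1}(\da u)$ is Scott closed, which forces $f(A)$ and $f(cl(A))$ to have the same upper bounds. The paper states this comparison of upper bounds once (rather than treating the two implications separately as you do), but the content is identical.
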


\begin{proof}
	(1) Assume that $x$ is an upper bound of $f(A)$. Then $f(A) \subseteq \da x$ and $A \subseteq f^{-1}(\da x) \in \Gamma(P)$. Hence $cl(A) \subseteq f^{-1}(\da x)$, and then $f(cl(A)) \subseteq  \da x$, i.e., $x$ is an upper bound of $f(cl(A))$. The converse is clearly true. Thus $\bv f(A)$ exists iff $\bv f(cl(A))$ exists, and both imply that $\bv f(A) = \bv f(cl(A))$.
	
	(2) It is straightforward from (1) and Definition \ref{d1}.
\end{proof}

	For any dcpo $P$, we write $P^{\vee} =\{A \subseteq P : A$ is $\bv${-}existing and Scott closed$\}$. By Lemma \ref{gj}, we immediately have that $\overline{\Gamma_{c}(P)} \subseteq P^{\vee}$. We shall show that the converse inclusion is also true, and then a characterization of the consistent Hoare powerdomain is obtained.

\begin{definition}
	A subset $A$ of a dcpo-$\vee_{\ua}$-semilattice $L$ is called \emph{$F$-Scott closed} if it is Scott closed and for any consistent nonempty finite set $F \subseteq A$, $\bv F \in A$. We write $\Gamma_{F}(L)$, called the \emph{$F$-Scott closure system} on $L$, for the set of all $F$-Scott closed subsets of $L$, and let $cl_{F}$ denote the corresponding closure operator. A function $f : L \rightarrow M$ between dcpo-$\vee_{\ua}$-semilattices is called \emph{$F$-Scott continuous} if $f$ is continuous with respect to the $F$-Scott closure systems.
\end{definition}

\begin{proposition}\label{p3.4}
	\emph{(1)} A function $f : L \rightarrow M$ between dcpo-$\vee_{\ua}$-semilattices is a dcpo{-}$\vee_{\ua}$-semilattice homomorphism iff $f$ is $F$-Scott continuous.
	
	\emph{(2)} If a nonempty subset $A$ of a dcpo-$\vee_{\ua}$-semilattice $L$ is consistent, then $cl_{F}(A) = \da \bv A$.
\end{proposition}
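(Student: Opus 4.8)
The plan is to exploit one basic observation: for any point $z$ of a dcpo-$\vee_{\ua}$-semilattice, the principal ideal $\da z$ is $F$-Scott closed, since it is Scott closed and any consistent finite $F \subseteq \da z$ has $z$ as an upper bound, whence $\bv F \leq z$. Together with the easy fact that $\Gamma_{F}(L)$ is a closure system (an intersection of $F$-Scott closed sets is again Scott closed and stable under finite consistent joins), this lets us test membership and continuity by pulling back principal ideals.

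For (1), suppose first that $f$ is a dcpo-$\vee_{\ua}$-semilattice homomorphism. Then $f$ is Scott continuous, so for $F$-Scott closed $B \subseteq M$ the preimage $f^{-1}(B)$ is Scott closed; and if $F \subseteq f^{-1}(B)$ is consistent and finite, then $f(F) \subseteq B$ is consistent and finite, and since $f$ preserves binary consistent joins it preserves $\bv F$ (by associativity and induction, all partial joins involved existing because $F$ is bounded), so $f(\bv F) = \bv f(F) \in B$, i.e. $\bv F \in f^{-1}(B)$; hence $f$ is $F$-Scott continuous. Conversely, let $f$ be $F$-Scott continuous. For $x \leq y$, the set $f^{-1}(\da f(y))$ is $F$-Scott closed, hence a lower set, and contains $y$, hence $x$, so $f(x) \leq f(y)$ and $f$ is monotone. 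For directed $D$ with $\bv D = d$, monotonicity yields that $\bv f(D)$ exists in $M$ below $f(d)$, while $f^{-1}(\da \bv f(D))$ is $F$-Scott closed, hence Scott closed, and contains $D$, hence $d$, giving $f(d) \leq \bv f(D)$; thus $f$ preserves directed joins. Finally, for a consistent pair $x,y$, monotonicity shows $f(x),f(y)$ are consistent with $f(x)\vee_{\ua}f(y) \leq f(x\vee_{\ua}y)$, while $f^{-1}(\da(f(x)\vee_{\ua}f(y)))$ is $F$-Scott closed and contains the consistent finite set $\{x,y\}$, hence $x\vee_{\ua}y$, giving the reverse inequality. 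So $f$ is a homomorphism.

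For (2), I would first record that a nonempty consistent $A \subseteq L$ has a least upper bound: the family $\{\bv F : \emptyset \neq F \subseteq A \text{ finite}\}$ consists of existing joins (each such $F$ is bounded by an upper bound of $A$) and is directed, and its directed join in the dcpo $L$ is readily seen to be $\bv A$. Now $\da \bv A$ is $F$-Scott closed and contains $A$, so $cl_{F}(A) \subseteq \da \bv A$. For the reverse inclusion it suffices to show $\bv A \in cl_{F}(A)$: since $cl_{F}(A)$ is $F$-Scott closed and contains $A$, it contains $\bv F$ for every finite nonempty $F \subseteq A$, hence contains the directed family of all these finite joins, and being Scott closed it contains their directed join $\bv A$; as $cl_{F}(A)$ is a lower set, $\da \bv A \subseteq cl_{F}(A)$, and equality follows.

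The closure-system and Scott-continuity bookkeeping is routine; the point that needs genuine care is that $F$-Scott continuity does not obviously entail Scott continuity, since an $F$-Scott closed set is strictly more special than a Scott closed set and $\Gamma_{F}(M)$ need not be the closed sets of a topology, so the recovery of monotonicity and of directed-join preservation must proceed through carefully chosen principal ideals rather than through any containment $\Gamma(M) \subseteq \Gamma_{F}(M)$. The same "directed join of finite consistent joins" device that produces $\bv A$ in (2) is precisely what drives both arguments.
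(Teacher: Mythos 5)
Your proposal is correct and follows essentially the same route as the paper: the forward direction by pulling back $F$-Scott closed sets along a homomorphism, the converse by testing against the $F$-Scott closed principal ideals $\da \bv f(F)$ (and $\da \bv f(D)$), and part (2) via the directed family of finite consistent joins whose supremum is $\bv A$. You merely spell out a few steps the paper leaves implicit, such as deriving monotonicity from preimages of principal ideals and noting that $\da \bv A$ is itself $F$-Scott closed.
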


\begin{proof}
	(1) Suppose that $f$ is a dcpo{-}$\vee_{\ua}$-semilattice homomorphism. Let $A \subseteq M$ be $F$-Scott closed. We shall show that $f^{-1}(A) \subseteq L$ is also $F$-Scott closed. For any consistent nonempty finite $F \subseteq f^{-1}(A)$, we have $f(F)$ is also nonempty finite and consistent since $f$ is monotone. Then $f(\bv F) = \bv f(F) \in A$ and then $\bv F$ is in $f^{-1}(A)$. Similarly, $f^{-1}(A)$ is also closed with respect to directed joins. Thus $f$ is $F$-Scott continuous.
	
	Conversely, let $f$ be $F$-Scott continuous. Clearly, $f$ is monotone. Let $F$ be a consistent nonempty finite subset of $L$. Then $\bv f(F) \leq f(\bv F)$. The set $\da \bv f(F)$ is $F$-Scott closed in $M$. Then $f^{-1}(\da \bv f(F))$ is also $F$-Scott closed and $F \subseteq f^{-1}(\da \bv f(F))$. Hence $\bv F \in f^{-1}(\da \bv f(F))$ and then $f(\bv F) \leq \bv f(F)$. Thus $f(\bv F) = \bv f(F)$. Analogously, $f(\bv D) = \bv f(D)$ for all directed $D \subseteq L$. Therefore, $f$ is a dcpo{-}$\vee_{\ua}$-semilattice homomorphism.
	
	(2) Let $C \subseteq L$ be any $F$-Scott closed set with $A \subseteq C$. Notice that each subset of $A$ is also consistent. Then $D := \{\bv F : F \subseteq A$ is nonempty and finite$\}$ is a directed subset of $C$, and hence $\bv D = \bv A \in C$. Moreover, every $F$-Scott closed set is a lower set. Then $\da \bv A \subseteq C$, and thus $cl_{F}(A) = \da \bv A$.
\end{proof}

\begin{definition}
	Let $L$ be a dcpo-$\vee_{\ua}$-semilattice. A subset $A \subseteq L$ is called \emph{$F$-$\bv$-existing} if for any dcpo{-}$\vee_{\ua}$-semilattice homomorphism $f : L \rightarrow M$ mapping into a dcpo-$\vee_{\ua}$-semilattice $M$, $\bv f(A)$ exists in $M$. 
\end{definition}

\begin{lemma}\label{l3.6}
	Let $L$ be a dcpo-$\vee_{\ua}$-semilattice. A subset $A \subseteq L$ is $F$-$\bv$-existing iff $cl_{F}(A)$ is $F$-$\bv$-existing.
\end{lemma}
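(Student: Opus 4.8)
The plan is to mimic the proof of Proposition~\ref{p2}(1)--(2), replacing continuous functions by dcpo-$\vee_{\ua}$-semilattice homomorphisms and the Scott closure by $cl_{F}$. The direction ``$cl_{F}(A)$ is $F$-$\bv$-existing $\Rightarrow$ $A$ is $F$-$\bv$-existing'' is immediate, since $f(A) \subseteq f(cl_{F}(A))$ forces $\bv f(cl_{F}(A))$ to be an upper bound of $f(A)$, and any upper bound of $f(A)$ will turn out to bound $f(cl_{F}(A))$ as well (see the next paragraph), so the two joins coincide whenever either exists. Hence it suffices to prove the forward implication together with the equality $\bv f(A) = \bv f(cl_{F}(A))$.

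First I would establish the key ``upper bound transfer'' lemma: for a dcpo-$\vee_{\ua}$-semilattice homomorphism $f : L \rightarrow M$ and $A \subseteq L$, if $x \in M$ is an upper bound of $f(A)$, then $x$ is an upper bound of $f(cl_{F}(A))$. The argument is that $f(A) \subseteq \da x$ gives $A \subseteq f^{-1}(\da x)$; now $\da x$ is $F$-Scott closed in $M$ (it is Scott closed, and for a consistent finite $F \subseteq \da x$ we have $\bv F \le x$), so by Proposition~\ref{p3.4}(1) its preimage $f^{-1}(\da x)$ is $F$-Scott closed in $L$, whence $cl_{F}(A) \subseteq f^{-1}(\da x)$ and therefore $f(cl_{F}(A)) \subseteq \da x$. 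The converse inclusion of upper bounds is trivial since $A \subseteq cl_{F}(A)$. This shows $f(A)$ and $f(cl_{F}(A))$ have exactly the same upper bounds in $M$, so $\bv f(A)$ exists iff $\bv f(cl_{F}(A))$ exists, and when they exist they are equal.

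With this in hand, the lemma follows: if $A$ is $F$-$\bv$-existing, then for every homomorphism $f : L \rightarrow M$ the join $\bv f(A)$ exists, hence so does $\bv f(cl_{F}(A))$ (they are equal), so $cl_{F}(A)$ is $F$-$\bv$-existing; the reverse implication was handled above. I do not expect any real obstacle here — the only point requiring a little care is verifying that $\da x$ is genuinely $F$-Scott closed, i.e. closed under joins of consistent finite subsets, which is where the hypothesis that $f$ is a homomorphism (rather than merely Scott continuous) is used via Proposition~\ref{p3.4}(1). Everything else is a routine transcription of the proof of Proposition~\ref{p2}.
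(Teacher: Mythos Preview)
Your proposal is correct and matches the paper's approach exactly: the paper's proof simply says the argument is similar to that of Proposition~\ref{p2}, noting that every principal ideal $\da x$ is $F$-Scott closed. You have written out precisely this transcription, including the key observation about $\da x$ and the use of Proposition~\ref{p3.4}(1) to pull back $F$-Scott closedness along homomorphisms.
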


\begin{proof}
	The process is similar to that of Proposition \ref{p2}. Notice that every principal ideal $\da x$ is $F$-Scott closed.
\end{proof}

\begin{lemma}\label{l3.7}
	If a subset $A$ of a dcpo-$\vee_{\ua}$-semilattice $L$ is $F$-Scott closed and $F$-$\bv$-existing, then $A = \da a$ for some $a \in L$.
\end{lemma}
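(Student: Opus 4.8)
The plan is to extract from the hypothesis that $A$ is $F$-$\bv$-existing just one consequence --- that $\bv A$ exists inside $L$ itself --- and then combine this with $F$-Scott closedness to force $A$ down to a principal ideal.

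First I would apply the definition of $F$-$\bv$-existing to the identity map $\id_L : L \to L$, which is obviously a dcpo-$\vee_{\ua}$-semilattice homomorphism (it preserves directed sups and consistent pair joins trivially; cf.\ Proposition \ref{p3.4}(1)). This yields that $\bv \id_L(A) = \bv A$ exists in $L$, so in particular $A$ has an upper bound, i.e.\ $A$ is consistent. I would also note that $A \neq \emptyset$, since --- exactly as in the remark following Definition \ref{d1} for $\bv$-existing sets --- the empty set is not $F$-$\bv$-existing, a dcpo-$\vee_{\ua}$-semilattice need not have a least element. Now $A$ is a nonempty consistent subset of $L$, so Proposition \ref{p3.4}(2) applies and gives $cl_F(A) = \da \bv A$; since $A$ is $F$-Scott closed, $A = cl_F(A) = \da\bv A$, and $a := \bv A$ is the required element.

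If one prefers to avoid Proposition \ref{p3.4}(2), the conclusion can be obtained directly once $a = \bv A$ is known to exist: every nonempty finite $F \subseteq A$ is then bounded by $a$, hence consistent, so $\bv F$ exists and lies in $A$ by $F$-Scott closedness; the family $D = \{\bv F : \emptyset \ne F \subseteq A \text{ finite}\}$ is a directed subset of $A$ with $\bv D = \bv A = a$, whence $a \in A$ by Scott closedness, and then $A = \da a$ because $A$ is a lower set. I do not expect a genuine obstacle here; the one point requiring a moment's thought is recognizing that the identity homomorphism is the right test map for producing $\bv A \in L$, and the only edge case is $A = \emptyset$, handled by the same observation used for $\bv$-existing sets.
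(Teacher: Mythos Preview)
Your proposal is correct and follows essentially the same approach as the paper: both use the identity homomorphism to deduce that $\bv A$ exists in $L$, and then argue that $\bv A \in A$ so that $A = \da \bv A$. Your first route packages the final step via Proposition~\ref{p3.4}(2), while the paper carries out directly the argument you describe as your alternative (forming the directed set $D = \{\bv F : \emptyset \ne F \subseteq A \text{ finite}\}$ inside $A$); these are the same idea.
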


\begin{proof}
	The identity function $id : L \rightarrow L$ is a dcpo{-}$\vee_{\ua}$-semilattice homomorphism. We have that $\bv id(A) = \bv A$ exists in $L$. Then every nonempty finite subset $F \subseteq A$ is consistent, and hence $\bv F$ exists. Since $A$ is $F$-Scott closed, we obtain that $D = \{\bv F : F \subseteq  A$ is nonempty and finite$\}$ is a directed subset of $A$, and then $\bv D = \bv A \in A$. Thus $A = \da \bv A$, which completes the proof.
\end{proof}

\begin{lemma}\label{l3.8}
	A subset $A$ of a dcpo $P$ is $\bv$-existing iff $j(A)$ is an $F$-$\bv$-existing subset of $\overline{\Gamma_{c}(P)}$. \emph{(}Notice that $j$ is the function in Theorem \emph{\ref{gjt}.)}
\end{lemma}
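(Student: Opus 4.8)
The plan is to read Lemma~\ref{l3.8} off the universal property that defines $\overline{\Gamma_{c}(P)}$ as the free dcpo-$\vee_{\ua}$-semilattice (equivalently, free consistent inflationary semilattice) over $P$, as furnished by Theorem~\ref{gjt}. Concretely, the facts I will use are: (i) $\overline{\Gamma_{c}(P)}$ is itself a dcpo-$\vee_{\ua}$-semilattice and $j:P\to\overline{\Gamma_{c}(P)}$, $j(x)=\da x$, is Scott continuous; (ii) every dcpo-$\vee_{\ua}$-semilattice homomorphism is Scott continuous, so precomposition with $j$ turns a homomorphism out of $\overline{\Gamma_{c}(P)}$ into a continuous map out of $P$; and (iii) conversely, every Scott continuous $f:P\to L$ into a dcpo-$\vee_{\ua}$-semilattice $L$ factors as $f=\hat f\circ j$ for a (unique) dcpo-$\vee_{\ua}$-semilattice homomorphism $\hat f:\overline{\Gamma_{c}(P)}\to L$. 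One also records the elementary identity $g(j(A))=(g\circ j)(A)$ for any function $g$ defined on $\overline{\Gamma_{c}(P)}$.

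For the forward implication, assume $A$ is $\bv${-}existing and let $g:\overline{\Gamma_{c}(P)}\to M$ be an arbitrary dcpo-$\vee_{\ua}$-semilattice homomorphism into a dcpo-$\vee_{\ua}$-semilattice $M$. By (i) and (ii) the composite $g\circ j:P\to M$ is a continuous map into a dcpo-$\vee_{\ua}$-semilattice, so by hypothesis $\bv (g\circ j)(A)$ exists in $M$; since $(g\circ j)(A)=g(j(A))$, this says $\bv g(j(A))$ exists in $M$. As $g$ and $M$ were arbitrary, $j(A)$ is $F$-$\bv$-existing in $\overline{\Gamma_{c}(P)}$.

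For the converse, assume $j(A)$ is $F$-$\bv$-existing and let $f:P\to L$ be a continuous function into a dcpo-$\vee_{\ua}$-semilattice $L$. By (iii) there is a dcpo-$\vee_{\ua}$-semilattice homomorphism $\hat f:\overline{\Gamma_{c}(P)}\to L$ with $\hat f\circ j=f$. Applying the $F$-$\bv$-existence of $j(A)$ to $\hat f$ yields that $\bv \hat f(j(A))$ exists in $L$; but $\hat f(j(A))=(\hat f\circ j)(A)=f(A)$, so $\bv f(A)$ exists in $L$. Since $f$ and $L$ were arbitrary, $A$ is $\bv${-}existing, which completes the argument.

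The argument is largely bookkeeping once the adjunction is invoked, so I do not anticipate a substantive obstacle. The one point deserving care is matching the quantifier ranges in the two definitions: in Definition~\ref{d1} the target $L$ ranges over \emph{all} dcpo-$\vee_{\ua}$-semilattices, and one must check that the factorization of Theorem~\ref{gjt} is precisely "compose with $j$", both producing a continuous map from a homomorphism and extending a continuous map to a homomorphism, so that these $L$'s are exactly the codomains $M$ appearing in the definition of $F$-$\bv$-existing. (If one prefers not to rely implicitly on the factorization being given by precomposition with $j$, this can be spelled out directly from the freeness of $\overline{\Gamma_{c}(P)}$.)
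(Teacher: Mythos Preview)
Your proposal is correct and is essentially the paper's own proof: both directions use exactly the universal property of $\overline{\Gamma_{c}(P)}$ from Theorem~\ref{gjt}, composing a homomorphism with $j$ to get a continuous map (forward) and extending a continuous map through $j$ to a homomorphism (converse), then identifying $g(j(A))=(g\circ j)(A)$. The only difference is presentational; your remarks on matching quantifier ranges are sound but not needed beyond what Theorem~\ref{gjt} already states.
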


\begin{proof}
Suppose that $A \subseteq P$ is $\bv$-existing. Let $f : \overline{\Gamma_{c}(P)} \rightarrow L$ be any dcpo{-}$\vee_{\ua}$-semilattice homomorphism mapping into a dcpo{-}$\vee_{\ua}$-semilattice $L$. Then $f \circ j$ is a Scott continuous function from $P$ to $L$, and hence $\bv f \circ j (A) = \bv f(j(A))$ exists in $L$. Thus $j(A)$ is an $F$-$\bv$-existing subset of $\overline{\Gamma_{c}(P)}$.

Now assume that $j(A)$ is an $F$-$\bv$-existing subset of $\overline{\Gamma_{c}(P)}$. Let $f : P \rightarrow L$ be a Scott continuous mapping into a dcpo{-}$\vee_{\ua}$-semilattice $L$. Then by Theorem \ref{gjt}, there exists a unique dcpo{-}$\vee_{\ua}$-semilattice homomorphism $\tilde{f} : \overline{\Gamma_{c}(P)} \rightarrow L$ such that $f = \widetilde{f} \circ j$. Then $\bv \tilde{f}(j(A)) = \bv \widetilde{f} \circ j(A) = \bv f(A)$ exists. Thus $A$ is a  $\bv$-existing subset of $P$.
\end{proof}

We now come to the characterization of the consistent Hoare powerdomains by $\bv$-existing Scott closed sets:

\begin{theorem}
	Let $P$ be a dcpo and $A \subseteq P$. Then $A \in \overline{\Gamma_{c}(P)}$ iff $A$ is Scott closed and $\bv$-existing, i.e., $\overline{\Gamma_{c}(P)} = P^{\vee}$.
\end{theorem}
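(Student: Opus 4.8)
One inclusion is already available. If $A \in \overline{\Gamma_{c}(P)}$, then $A$ is Scott closed because $\overline{\Gamma_{c}(P)} \subseteq \Gamma(P)$, and $\bv f(A)$ exists for every Scott continuous $f$ into a dcpo-$\vee_{\ua}$-semilattice by Lemma \ref{gj}, so $A$ is $\bv$-existing; hence $\overline{\Gamma_{c}(P)} \subseteq P^{\vee}$. Thus the whole content of the theorem is the reverse inclusion $P^{\vee} \subseteq \overline{\Gamma_{c}(P)}$, and that is what I would prove.

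So I would fix a Scott closed, $\bv$-existing set $A \subseteq P$ and transport it into $\overline{\Gamma_{c}(P)}$ along the embedding $j$ of Theorem \ref{gjt}. By Lemma \ref{l3.8}, $j(A)$ is an $F$-$\bv$-existing subset of $\overline{\Gamma_{c}(P)}$; by Lemma \ref{l3.6}, $cl_{F}(j(A))$ is then also $F$-$\bv$-existing, and it is $F$-Scott closed by definition of the closure operator. Hence Lemma \ref{l3.7} applies inside the dcpo-$\vee_{\ua}$-semilattice $\overline{\Gamma_{c}(P)}$ and yields an element $\mathcal{A} \in \overline{\Gamma_{c}(P)}$ with $cl_{F}(j(A)) = {\da}\mathcal{A}$, the principal ideal of $\mathcal{A}$ in $\overline{\Gamma_{c}(P)}$. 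Everything then reduces to showing $\mathcal{A} = A$, for that gives $A = \mathcal{A} \in \overline{\Gamma_{c}(P)}$ and finishes the proof.

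The identification $\mathcal{A} = A$ is the step I expect to be the crux, and the place where one must be careful not to argue in a circle (it is tempting, but useless, to try to prove ``$A \in \overline{\Gamma_{c}(P)}$'' directly at this point). One inclusion is immediate: $j(A) \subseteq cl_{F}(j(A)) = {\da}\mathcal{A}$ forces ${\da}a \subseteq \mathcal{A}$, hence $a \in \mathcal{A}$, for every $a \in A$, so $A \subseteq \mathcal{A}$. For $\mathcal{A} \subseteq A$ I would instead work with the family $\mathcal{C} = \{B \in \overline{\Gamma_{c}(P)} : B \subseteq A\}$ and verify that it is $F$-Scott closed in $\overline{\Gamma_{c}(P)}$: it is clearly a lower set; it is closed under directed suprema because a directed supremum in $\Gamma(P)$ (hence in the sub-dcpo $\overline{\Gamma_{c}(P)}$) is the Scott closure of the union, and $cl(\bigcup_{i} B_{i}) \subseteq cl(A) = A$ since $A$ is Scott closed; and it is closed under joins of consistent finite subfamilies because, by Theorem \ref{gjt}, such a join is just the set-theoretic union of the members, which again stays inside $A$. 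Since $A$ is a lower set, $j(A) \subseteq \mathcal{C}$, whence $cl_{F}(j(A)) \subseteq \mathcal{C}$; in particular $\mathcal{A} \in \mathcal{C}$, i.e. $\mathcal{A} \subseteq A$. Combining the two inclusions gives $\mathcal{A} = A$, so $A \in \overline{\Gamma_{c}(P)}$ and $P^{\vee} = \overline{\Gamma_{c}(P)}$.

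The only routine technical point underlying the above is that suprema computed in $\overline{\Gamma_{c}(P)}$ agree with those computed in $\Gamma(P)$ — directed suprema because $\overline{\Gamma_{c}(P)}$ is Scott closed in $\Gamma(P)$, and binary joins of consistent pairs because of Theorem \ref{gjt} — so that the closure conditions for $\mathcal{C}$ may be checked using unions and Scott closures of unions; the rest is a direct verification.
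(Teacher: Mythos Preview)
Your argument is correct and follows the same chain of lemmas (\ref{l3.8}, \ref{l3.6}, \ref{l3.7}) as the paper. The only difference is that you do more than necessary at the end: once you have $\mathcal{A}\in\overline{\Gamma_{c}(P)}$ with $A\subseteq\mathcal{A}$, the paper simply observes that $\overline{\Gamma_{c}(P)}$, being Scott closed in $\Gamma(P)$, is in particular a \emph{lower set} of $\Gamma(P)$, so $A\in\Gamma(P)$ and $A\subseteq\mathcal{A}$ already force $A\in\overline{\Gamma_{c}(P)}$ --- there is no need to prove the reverse inclusion $\mathcal{A}\subseteq A$. Your $\mathcal{C}$-argument for that inclusion is correct (and essentially the same construction reappears in the paper's proof of Theorem~\ref{t3.10}), but for the present theorem it is avoidable, so what you flagged as ``the crux'' turns out to be a detour.
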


\begin{proof}
	By Lemma \ref{gj}, we have that $\overline{\Gamma_{c}(P)} \subseteq P^{\vee}$. Now suppose that $A \in P^{\vee}$. Then, by Lemma~\ref{l3.8}, the set $j(A)$ is an $F$-$\bv$-existing subset of $\overline{\Gamma_{c}(P)}$. And by Lemma \ref{l3.6}, the \mbox{$F$-Scott} closed set $cl_{F}(j(A))$ is also $F$-$\bv$-existing. Then $cl_{F}(j(A))$ has a supremum which is in itself by Lemma \ref{l3.7}. Since the order on $\overline{\Gamma_{c}(P)}$ is the inclusion relation, we have $\bv cl_{F}(j(A)) = \bigcup cl_{F}(j(A)) \in \overline{\Gamma_{c}(P)}$. For each $a \in A$, $j(a) = \da a \in cl_{F}(j(A))$, and then $A \subseteq \bigcup cl_{F}(j(A))$. Thus $A \in \overline{\Gamma_{c}(P)}$ because $\overline{\Gamma_{c}(P)}$ is a lower set of $\Gamma(P)$, which completes the proof.
\end{proof}

\begin{theorem}\label{t3.10}
	Let $P$ be a dcpo. The family of Scott closed sets $\Gamma_{0}(P) = \Gamma(P) \bigcup \{\emptyset\}$ is order isomorphic to the $F$-Scott closure system $\Gamma_{F}(H_{c}(P))$ on the consistent Hoare powerdomain $H_{c}(P)$, i.e., $\Gamma_{0}(P) \cong \Gamma_{F}(H_{c}(P))$.
\end{theorem}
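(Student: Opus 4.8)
The plan is to exhibit an explicit order isomorphism, using the identification $H_c(P)=\overline{\Gamma_c(P)}=P^{\vee}$ just established together with the fact that $\overline{\Gamma_c(P)}$ is a Scott closed — hence lower, and closed under directed sups — subset of $\Gamma(P)$, on which the directed join of a family $\{B_i\}$ is $cl(\bigcup_i B_i)$.

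Define $\varphi:\Gamma_0(P)\to\Gamma_F(H_c(P))$ and $\psi:\Gamma_F(H_c(P))\to\Gamma_0(P)$ by
\[
\varphi(C)=\{B\in H_c(P):B\subseteq C\},\qquad \psi(\mathcal A)=\textstyle\bigcup_{B\in\mathcal A}B,
\]
the latter union taken inside $P$; in particular $\varphi(\emptyset)=\emptyset$ and $\psi(\emptyset)=\emptyset$. First I would check these land in the right sets. The set $\varphi(C)$ is a lower set; it is stable under directed joins of $H_c(P)$ since such a join is $cl(\bigcup\cdot)$ and $C$ is Scott closed; and it is stable under consistent finite joins since in $H_c(P)$ these are unions, which stay inside $C$ — so $\varphi(C)\in\Gamma_F(H_c(P))$, and in fact $\varphi(C)=cl_F(j(C))$ once the crux fact below is available. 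For $\psi$ with $\mathcal A\neq\emptyset$: $\psi(\mathcal A)$ is a nonempty lower set, and if $D\subseteq\psi(\mathcal A)$ is directed in $P$ then each $\da d$ $(d\in D)$ lies in $\mathcal A$ because $\mathcal A$ is a lower set, the family $\{\da d:d\in D\}$ is directed in $H_c(P)$ with join $\da(\bv D)=j(\bv D)$, whence $\da(\bv D)\in\mathcal A$ and $\bv D\in\psi(\mathcal A)$; so $\psi(\mathcal A)$ is Scott closed.

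Next come the two round trips. The identity $\psi(\varphi(C))=C$ is immediate: $\bigcup\{B:B\subseteq C\}\subseteq C$, while $x\in C$ gives $\da x\in\varphi(C)$ and so $x\in\psi(\varphi(C))$. For $\varphi(\psi(\mathcal A))=\mathcal A$ the inclusion $\supseteq$ is clear, so suppose $B\in H_c(P)$ with $B\subseteq\bigcup\mathcal A$. For each $x\in B$ there is $A\in\mathcal A$ with $x\in A$, hence $\da x\subseteq A$ and, $\mathcal A$ being a lower set, $\da x\in\mathcal A$; thus $j(B)\subseteq\mathcal A$ and therefore $cl_F(j(B))\subseteq\mathcal A$. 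It then suffices to know that $B\in cl_F(j(B))$ for every $B\in H_c(P)$, and this is the one point that needs real work. I would argue it exactly as in the proof of the preceding characterization theorem: since $B\in P^{\vee}$, Lemma \ref{l3.8} makes $j(B)$ an $F$-$\bv$-existing subset of $H_c(P)$, Lemma \ref{l3.6} makes $cl_F(j(B))$ $F$-$\bv$-existing, and Lemma \ref{l3.7} then forces $cl_F(j(B))=\da a$ with $a=\bigcup cl_F(j(B))$; since $j(B)\subseteq cl_F(j(B))$ we get $B=\bigcup j(B)\subseteq a$, and since $cl_F(j(B))\subseteq\{B'\in H_c(P):B'\subseteq B\}$ (the latter being $F$-Scott closed and containing $j(B)$) we get $a\subseteq B$; hence $a=B$ and $B\in\da a=cl_F(j(B))\subseteq\mathcal A$.

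Finally, $\varphi$ and $\psi$ are plainly monotone for the inclusion order and, by the previous paragraph, mutually inverse; hence $\psi$ (with inverse $\varphi$) is the desired order isomorphism $\Gamma_F(H_c(P))\cong\Gamma_0(P)$, with the bottom $\emptyset$ of $\Gamma_F(H_c(P))$ matching the bottom $\emptyset$ of $\Gamma_0(P)$. The only delicate step is the identity $B\in cl_F(j(B))$ for $B\in H_c(P)$, which merely re-packages the combined use of Lemmas \ref{l3.6}–\ref{l3.8} already made in the characterization theorem; everything else is a routine check that the two candidate maps are well defined and cancel.
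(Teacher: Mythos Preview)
Your proof is correct and follows essentially the paper's approach: the same pair of maps $A \mapsto \{B \in H_c(P) : B \subseteq A\} = cl_F(j(A))$ and $\mathcal{A} \mapsto \bigcup \mathcal{A}$, with the crux being $B \in cl_F(j(B))$ for $B \in H_c(P)$. The one difference is that the paper dispatches this crux more directly via Proposition~\ref{p3.4}(2) --- since $j(B)$ is bounded above by $B$ in $\overline{\Gamma_c(P)}$, consistency gives $cl_F(j(B)) = \da \bv j(B) = \da B$ at once --- rather than re-invoking Lemmas~\ref{l3.6}--\ref{l3.8}.
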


\begin{proof}
	Define a function $\eta : \Gamma_{0}(P) \rightarrow \overline{\Gamma_{c}(P)}$ by
	\begin{center}
		$\eta(A) = cl_{F}(j(A))$,
	\end{center}
	where $cl_{F}(j(A))$ is the $F$-Scott closure of $j(A) = \{\da a : a \in A\}$ in $\overline{\Gamma_{c}(P)}$. We shall show that $\eta$ is an order isomorphism. Obviously, $\eta$ is monotone. 
	
	Firstly, we prove that $\eta$ is injective. Suppose that $A, B \in \Gamma_{0}(P)$ with $A \neq B$. Without loss of generality, we may assume that there is $b \in B {\setminus} A$. Let $\mathcal{C} = \{C \in \overline{\Gamma_{c}(P)} : C \subseteq A\}$. Clearly, $j(A) \subseteq \mathcal{C}$ and $\da b \notin \mathcal{C}$. We claim that $\mathcal{C}$ is $F$-Scott closed in $\overline{\Gamma_{c}(P)}$. Then $cl_{F}(j(A)) \subseteq \mathcal{C}$, and hence $\da b \notin cl_{F}(j(A))$. But $\da b \in j(B) \subseteq cl_{F}(j(B))$. Thus $\eta(A) \neq \eta(B)$, i.e., $\eta$ is injective. Indeed, let $\mathcal{D} \subseteq \mathcal{C}$ be directed, then $\bv \mathcal{D}$, the supremum of $\mathcal{D}$ in $\overline{\Gamma_{c}(P)}$, is the Scott closure of $\bigcup \mathcal{D}$ in $P$, and hence $\bv \mathcal{D} \subseteq A$ since each element of $\mathcal{D}$ is contained in $A$, i.e., $\bv \mathcal{D} \in \mathcal{C}$. And if $C_{1}, C_{2} \in \mathcal{C}$, then $C_{1} \bigcup C_{2} \subseteq A$, thus $C_{1} \vee C_{2} \in \mathcal{C}$ (notice that $\overline{\Gamma_{c}(P)}$ is closed under finite unions), which proves the claim. 
	
	We next prove that $\eta$ is surjective. Suppose that $\mathcal{A}$ is $F$-Scott closed in $\overline{\Gamma_{c}(P)}$. We claim that $\bigcup \mathcal{A}$ is Scott closed in $P$. Indeed, if $D \subseteq \bigcup \mathcal{A}$ is directed, then $j(D) \subseteq \mathcal{A}$ is directed, and hence $\bv j(D) = \da \bv D \in \mathcal{A}$, i.e., $\bv D \in \bigcup \mathcal{A}$.
	We shall show that $\eta(\bigcup \mathcal{A}) = \mathcal{A}$. Since $j(\bigcup \mathcal{A}) \subseteq \mathcal{A}$, we have $\eta(\bigcup \mathcal{A}) = cl_{F}(j(\bigcup \mathcal{A})) \subseteq \mathcal{A}$. Conversely, assume that $A \in  \mathcal{A}$. Then $j(A)$ is consistent in~$\overline{\Gamma_{c}(P)}$. By Proposition \ref{p3.4}(2), we have $\bv j(A) = A \in cl_{F}(j(A)) \subseteq cl_{F}(j(\bigcup \mathcal{A}))$. Thus $\eta(\bigcup \mathcal{A}) = cl_{F}(j(\bigcup \mathcal{A})) = \mathcal{A}$, i.e., $\eta$ is surjective, which completes the proof.	
\end{proof}

A dcpo is called \emph{sober} if every Scott closed irreducible set is of the form $\da x$. It is known that for sober dcpos $P_{1}$ and $P_{2}$, $\Gamma_{0}(P_{1}) \cong \Gamma_{0}(P_{2})$ iff $P_{1} \cong P_{2}$ (see \cite{HZP,Zhao2018U} for more results about uniqueness of dcpos based on the Scott closed set lattices). In particularly, the Scott topology of every quasicontinuous domain is sober (each domain is a quasicontinuous domain, see \cite{gg1} for the detailed definition of a quasicontinuous domain). By applying the above theorem, we immediately have that for sober dcpos, the consistent Hoare powerdomains are uniquely determined up to isomorphism:

\begin{corollary}
	If dcpos $P_{1}$ and $P_{2}$ are sober, then $H_{c}(P_{1}) \cong H_{c}(P_{2})$ iff $P_{1} \cong P_{2}$.
\end{corollary}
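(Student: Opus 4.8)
The plan is to handle the two implications of the biconditional separately. The direction $P_{1} \cong P_{2} \Rightarrow H_{c}(P_{1}) \cong H_{c}(P_{2})$ is purely formal, relying only on the universal property that defines $H_{c}(-)$; the converse is the substantive one and will be obtained by transporting the isomorphism up to the $F$-Scott closure systems via Proposition \ref{p3.4}(1), then down to $\Gamma_{0}(-)$ via Theorem \ref{t3.10}, and finally appealing to the known reconstruction result for sober dcpos recalled just above the corollary.

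For the easy direction, let $h : P_{1} \to P_{2}$ be a dcpo isomorphism. Then $j_{2} \circ h : P_{1} \to H_{c}(P_{2})$ is Scott continuous, so by the freeness of $H_{c}(P_{1})$ it extends uniquely to a dcpo-$\vee_{\ua}$-semilattice homomorphism $g : H_{c}(P_{1}) \to H_{c}(P_{2})$ with $g \circ j_{1} = j_{2} \circ h$; symmetrically $h^{-1}$ induces $g' : H_{c}(P_{2}) \to H_{c}(P_{1})$ with $g' \circ j_{2} = j_{1} \circ h^{-1}$. Both $g' \circ g$ and $\id_{H_{c}(P_{1})}$ are dcpo-$\vee_{\ua}$-semilattice homomorphisms extending $j_{1}$ along $j_{1}$, so the uniqueness clause of the universal property forces $g' \circ g = \id$, and likewise $g \circ g' = \id$. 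Hence $g$ is an isomorphism of dcpo-$\vee_{\ua}$-semilattices (and this is the sense in which $H_{c}(P_{1}) \cong H_{c}(P_{2})$ must be read throughout, since $\Gamma_{F}$ depends on the partial join, not merely on the underlying dcpo).

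For the converse, suppose $g : H_{c}(P_{1}) \to H_{c}(P_{2})$ is an isomorphism of dcpo-$\vee_{\ua}$-semilattices. By Proposition \ref{p3.4}(1) both $g$ and $g^{-1}$ are $F$-Scott continuous, so $A \mapsto g^{-1}(A)$ carries $\Gamma_{F}(H_{c}(P_{2}))$ into $\Gamma_{F}(H_{c}(P_{1}))$ and $A \mapsto g(A)$ carries $\Gamma_{F}(H_{c}(P_{1}))$ into $\Gamma_{F}(H_{c}(P_{2}))$; these two maps are mutually inverse and both preserve inclusion, giving $\Gamma_{F}(H_{c}(P_{1})) \cong \Gamma_{F}(H_{c}(P_{2}))$. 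Composing with the isomorphisms $\Gamma_{0}(P_{i}) \cong \Gamma_{F}(H_{c}(P_{i}))$ from Theorem \ref{t3.10} yields $\Gamma_{0}(P_{1}) \cong \Gamma_{0}(P_{2})$, and since $P_{1}, P_{2}$ are sober the result of \cite{HZP,Zhao2018U} recalled above delivers $P_{1} \cong P_{2}$. The only point needing a little care is exactly this last chain: one must be sure that a dcpo-$\vee_{\ua}$-semilattice isomorphism induces a genuine order isomorphism of $F$-Scott closure systems in both directions, which is where invoking Proposition \ref{p3.4}(1) on $g$ and on $g^{-1}$ separately is essential — everything else is bookkeeping.
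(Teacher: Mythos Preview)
Your proof is correct and follows the same route as the paper's: pass from $H_{c}(P_{1}) \cong H_{c}(P_{2})$ to $\Gamma_{F}(H_{c}(P_{1})) \cong \Gamma_{F}(H_{c}(P_{2}))$, then via Theorem~\ref{t3.10} to $\Gamma_{0}(P_{1}) \cong \Gamma_{0}(P_{2})$, and finally use the sober-dcpo reconstruction result; the paper compresses this into a single line of implications while you spell out the universal-property argument for the easy direction and invoke Proposition~\ref{p3.4}(1) explicitly for the hard one. One small remark: your insistence that the isomorphism be read in the dcpo-$\vee_{\ua}$-semilattice sense is harmless but unnecessary, since the partial operation $\vee_{\ua}$ is entirely determined by the order, so any order isomorphism between dcpo-$\vee_{\ua}$-semilattices is automatically a dcpo-$\vee_{\ua}$-semilattice isomorphism.
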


\begin{proof}
	By Theorem \ref{t3.10}, $H_{c}(P_{1}) \cong H_{c}(P_{2})$ implies $\Gamma_{F}(H_{c}(P_{1})) \cong \Gamma_{F}(H_{c}(P_{2}))$ iff $\Gamma_{0}(P_{1}) \cong \Gamma_{0}(P_{2})$ iff $P_{1} \cong P_{2}$ implies $H_{c}(P_{1}) \cong H_{c}(P_{2})$.
\end{proof}

\section{Acknowledgments}
This work is supported by National Natural Science Foundation of China (No.11801491,11771134) and Shandong Provincial Natural Science Foundation, China (No. ZR2018BA004).



\end{document}